\newcounter{theorem}
\newtheorem{proposition}[theorem]{Proposition}
\newtheorem{result}[theorem]{Result}
\theoremstyle{definition}
\newtheorem{example}[theorem]{Example}
\title{Descriptional Complexity of the Languages $KaL$:
Automata, Monoids and Varieties\thanks{Both authors were supported by the Ministry of Education of the
Czech Republic under the project MSM 0021622409 and by
the Grant 201/09/1313 of the Grant Agency of the Czech Republic.}}
\author
{Ond\v rej Kl\'{i}ma \qquad\qquad \qquad Libor Pol\'ak
\institute{Department of Mathematics and Statistics\\ Masaryk University\\
Brno, Czech Republic}
\email{\quad klima@math.muni.cz\quad\qquad polak@math.muni.cz}
}
\begin{document}
\maketitle

\begin{abstract}
The first step when forming the polynomial hierarchies of languages
is to consider
languages of the form $KaL$ where $K$ and $L$ are over a finite alphabet $A$
and from a given variety $\mathcal V$
of languages,  $a\in A$ being a letter. All such $KaL$'s generate the  variety
of languages $\mathsf{BPol}_1(\mathcal V)$.

We estimate the numerical parameters of the language $KaL$
in terms of their values for $K$ and $L$. These parameters include 
the state complexity
of the minimal complete DFA and  the size of the syntactic monoids.
We also estimate 
the cardinality of the image of $A^*$
in the Sch\"utzenberger product of the syntactic monoids
of $K$ and $L$. In these three cases we obtain the optimal bounds.

Finally, we also consider estimates for 
the cardinalities of free monoids
in the variety of monoids corresponding to $\mathsf{BPol}_1(\mathcal V)$
in terms of sizes of the free monoids in the variety
of monoids corresponding to~$\mathcal V$.
\end{abstract}

\section{Introduction}

The polynomial operator assigns to each variety of languages $\mathcal V$
the class of all Boolean combinations
of the languages of the form
$$L_0a_1 L_1a_2\dots a_\ell L_\ell\ , \qquad\qquad (\ast)$$
where $A$ is a finite alphabet,
$a_1,\dots,a_\ell\in A,\, L_0,\dots,L_\ell\in \mathcal V(A)$ (i.e. they are
over $A$).
Such operators on classes of languages lead to several concatenation hierarchies.
Well known cases are the
Straubing-Th\'erien and the group hierarchies. Concatenation hierarchies
has been intensively studied by many authors -- see Section 8 of the
Pin's Chapter \cite{pi-kap}.
In the restricted case we fix a natural number $k$ and we allow only
$\ell\leq k$ in $(*)$ -- see \cite{kp-cai} and papers quoted there.
The resulting variety of languages is denoted by
$\mathsf{BPol}_k(\mathcal V)$. Using the Eilenberg correspondence, 
$\mathsf{BPol}_k$ operates also on pseudovarieties of monoids.
 We consider in this paper only the case $k=1$.

State complexity problems are a fundamental part of automata theory. 
Recent papers of a survey nature with numerous references are \cite{b}
by Brzozowski and \cite{y} by Yu.
First we estimate the state complexity of DFA automata for the language
$KaL$ in terms of the state complexities of $K$ and $L$. This is the content
of Section~2.

Secondly, for languages $K$ and $L$, we also estimate 
the cardinality of the image of $A^*$ under the natural homomorphism $\mu_a$
into the Sch\"utzenberger product of the syntactic monoids $M$ and $N$
of the languages
$K$ and $L$. This monoid $\mu_a(A^*)$ recognizes the language $KaL$, too.
The syntactic monoid of $KaL$
is a homomorphic image of the monoid $\mu_a(A^*)$.
The third question concerns its cardinality.

In all three problems we get estimates which can be reached by concrete
examples (for the first one in Section 2 and for the two remaining ones 
in Section~3).
In general: the size of the Sch\"utzenberger product equals at least
to the size of the monoid $\mu_a(A^*)$ which is at least the size
of the syntactic monoid of $KaL$.
In Section~3 we further consider natural examples showing that those
three numbers could differ drastically. The first example
is the language $B^*aC^*,\,B,C\subseteq A$.
The next proposition roughly estimates $\mu_a(A^*)$ for $\mathcal J$-trivial
monoids using their structure. 

In the last section we consider
a variety of languages $\mathcal V$ such that the corresponding
pseudovariety of monoids consists of all finite members of a locally finite
variety of monoids $\mathbf V$. Then the free monoid $F_{\mathbf V}(A)$
in $\mathbf V$
over a finite set $A$ is the smallest one recognizing
all languages in $\mathcal V(A)$.
We  embed the free monoid in the variety of monoids corresponding to
the class
$\mathsf{BPol}_1(\mathcal V)$ over $A$ 
into the product of
$|A|$ copies of the Sch\"utzenberger product of $F_{\mathbf V}(A)\Diamond
F_{\mathbf V}(A)$
which leads to a rough
estimate for the cardinality of this free monoid.

\section{Recognizing by Automata}

Let $A$ be a finite alphabet and let $L\subseteq A^*$ be a regular language.
The following construction of the minimal complete DFA is due to Brzozowski.
We put:
$\mathsf D=\{\, u^{-1} L \mid u\in A^*\, \}$ --
the set of all {\it left derivatives} of $L$
(here $u^{-1} L =\{\,v\in A^*\mid uv\in L\,\}$).
One assigns to $L$ its ``canonical'' {\it minimal automaton}
$\mathcal D = (\mathsf D,A,\cdot,L,F)$
using left derivatives; namely:
\begin{itemize}
\item $\mathsf D$ is the (finite) set of states,
\item $a\in A$ acts on $u^{-1} L$ by
$(u^{-1} L) \cdot a = a^{-1} (u^{-1} L)$,
\item $L$ is the initial state and $Q\in \mathsf D$ is a final state
(i.e., element of $F$ ) if and only if $1\in Q$.
\end{itemize}

\begin{proposition}
Let $K$ and $L$ be languages over a finite alphabet $A$ whose minimal
complete DFA have $k$ resp. $\ell$ states and let $a\in A$.
Then the minimal complete
DFA for the language $KaL$ has at most $k2^\ell$ states.
\end{proposition}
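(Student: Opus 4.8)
The plan is to construct an explicit DFA for $KaL$ with at most $k2^\ell$ states and argue that the minimal complete DFA can have no more states than this. Let me think about how to build a recognizing automaton from the minimal DFAs for $K$ and $L$.

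The standard approach for concatenation languages like $KaL$: we run the automaton for $K$, and whenever we read the letter $a$ from a state that is accepting in $K$'s automaton (meaning we've just completed a word in $K$), we "spawn" a copy of $L$'s automaton. Since multiple positions could be where $K$ ends and $L$ begins, we need to track a SET of states in $L$'s automaton simultaneously — this is essentially a subset construction on $L$'s automaton. That's where the $2^\ell$ comes from.

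So the product construction: states are pairs $(p, S)$ where $p$ is a state of $K$'s minimal DFA (so $p$ ranges over $k$ values), and $S$ is a subset of states of $L$'s minimal DFA (so $S$ ranges over $2^\ell$ values). The transition: on reading letter $b$, the $K$-component $p$ transitions normally to $p\cdot b$; the subset $S$ transitions via the usual powerset transition (each state in $S$ moves to its $b$-successor), AND if $b = a$ and $p$ was an accepting state of $K$, then we add the initial state of $L$'s automaton to the new subset. A pair $(p,S)$ is accepting iff $S$ contains an accepting state of $L$.

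Let me verify this recognizes $KaL$. A word $w$ is in $KaL$ iff $w = uav$ where $u\in K$, $v\in L$. After reading a prefix, the set $S$ should track exactly the set of suffixes $v$ such that the prefix read so far equals $uav'$ where $u\in K$, and $v'$ is a prefix of the remaining word with $v = v'\cdots$... more precisely, $S$ should be the set of states $q$ of $L$'s automaton such that $q = (v')^{-1}L$ for some factorization where we've already consumed $uav'$ with $u \in K$ and $v'$ the portion read after the $a$. I need to check that acceptance $1 \in$ some derivative matches $v\in L$.

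The plan would be to formalize this product automaton, prove by induction on word length that $S$ is exactly the set of $L$-derivatives reachable through valid factorizations, and conclude the language is $KaL$. The number of states of this (not-necessarily-minimal) automaton is at most $k \cdot 2^\ell$, and since the minimal complete DFA has no more states than any DFA recognizing the same language, the bound follows.

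The main obstacle I anticipate is setting up the invariant for the subset component cleanly and verifying the base case and the acceptance condition precisely — in particular, being careful about when the initial state of $L$ gets injected (only right after reading $a$ from a $K$-accepting state) and ensuring the induction handles the case where $p\cdot a$ and the spawning interact correctly. Using the derivative language from the Brzozowski construction described above makes the invariant natural: one shows that after reading $w$, the $L$-component is
$$S_w = \{\, v^{-1}L \mid w = xav,\ x\in K \,\},$$
and then $(p,S_w)$ is accepting iff $1 \in v^{-1}L$ for some such $v$, i.e.\ iff $w\in KaL$. The bookkeeping in that induction, rather than any deep idea, is the real work.
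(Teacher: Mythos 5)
Your proposal is correct and is essentially the paper's own argument in automaton-theoretic clothing: the paper observes directly that every left derivative of $KaL$ has the form $(u^{-1}K)aL\cup u_1^{-1}L\cup\cdots\cup u_m^{-1}L$ and counts the possibilities ($k$ choices for the first component, at most $2^\ell$ for the union), which is exactly your pair $(p,S_w)$ with $S_w=\{\,v^{-1}L\mid w=xav,\ x\in K\,\}$. Your explicit product construction plus the minimality argument is a repackaging of that same derivative count, so the two approaches coincide.
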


\begin{proof}\label{prop-aut}
Notice that an arbitrary left derivative of $KaL$ is of the form
$$(u^{-1}K)aL\cup u_1^{-1}L\cup\cdots\cup u_m^{-1}L,\ m
\text{ a non-negative integer},\, u_1,\dots,u_m\in A^*\,.$$
We have $k$ possible values for $u^{-1}K$ and 
$u_1^{-1}L\cup\cdots\cup u_m^{-1}L$, $m\leq \ell$,
has at most $2^\ell$ values.
The statement follows.
\end{proof}

The example in the next proposition is a slight modification of the construction 
in Theorem 2.1
in \cite{yzs}. It was suggested to the authors by J. Brzozowski. It shows
that the bound from Proposition~1 is tight. 

\begin{proposition}
For arbitrary natural numbers $k,\ell\geq 2$ there exist languages $K$ resp. $L$
whose
minimal complete DFA have $k$ resp. $\ell$ states such that each complete
DFA recognizing the language $KaL$ has at least $k2^\ell$ states.
\end{proposition}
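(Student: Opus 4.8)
The plan is to build an explicit pair of witness languages over a small fixed alphabet and to show that the determinized automaton for $KaL$ has exactly $k2^\ell$ reachable and pairwise inequivalent states, matching the upper bound of Proposition~1. The natural object to analyze is the nondeterministic automaton obtained by gluing the minimal DFAs of $K$ and $L$: keep both transition structures, declare the $L$-final states to be the only accepting states, and for the letter $a$ add, from every final state of $K$, an extra transition to the initial state of $L$ (on top of $K$'s own $a$-transition). A word lies in $KaL$ exactly when some run splits as ``a prefix ending in a $K$-final state, the letter $a$ spawning a copy of $L$, then a suffix ending in an $L$-final state''. Since $K$ is deterministic, after reading any input the set of active states is a single $K$-state together with a subset $S$ of the $\ell$ states of $L$; hence every left derivative of $KaL$ is encoded by a pair $(p,S)\in\{0,\dots,k-1\}\times 2^{\{0,\dots,\ell-1\}}$, recovering the count $k2^\ell$. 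It therefore suffices to exhibit $K$ and $L$ for which all $k2^\ell$ such pairs occur as reachable determinized states and no two of them coincide as derivatives.

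For reachability I would equip $K$ with a letter (say $b$) acting as the $k$-cycle on its states, so that the $K$-component can be set to any value $p$, and arrange the dynamics of $L$ so that visiting a $K$-final state and reading $a$ inserts the initial state of $L$ into the current subset $S$, while a further letter permutes or shifts the states of $L$ cyclically. Starting from the empty subset one can then insert threads one at a time and rotate them into the desired positions, reaching every subset $S\subseteq\{0,\dots,\ell-1\}$; combining this with the free choice of $p$ via $b$ gives all $k2^\ell$ pairs. This is the routine part: it is a standard ``the subset construction reaches everything'' argument and only requires that the chosen letters generate enough motion.

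The delicate point, and the main obstacle, is distinguishability, because acceptance of the combined automaton depends only on the $L$-component, namely on whether $S$ meets the set of $L$-final states; the $K$-state is never directly visible. To separate two configurations $(p,S)$ and $(p',S')$ one must therefore design a word whose effect on the $L$-subset differs in the two cases. Separating different subsets $S\ne S'$ with the same $p$ is handled by a word that isolates a single witnessing $L$-state and drives it, and only it, to an $L$-final state; this forces the transition monoid of $L$ to contain maps able to test each coordinate of the subset independently. Separating different $K$-states $p\ne p'$ is subtler: here I would exploit that from $p$ a suitable power of $b$ reaches the unique $K$-final state while from $p'$ it does not, so that a following $a$ spawns a new $L$-thread in one case but not the other; reading an $L$-word that carries that freshly spawned thread to acceptance while leaving the two $L$-subsets otherwise identical then exhibits the difference. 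Choosing the transition actions of $a$ and of the auxiliary letters on $K$ and on $L$ so that one can both read off each bit of $S$ and read off $p$ through the spawning behaviour, without the two mechanisms interfering, is where the care of this Brzozowski-style construction is concentrated.

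Finally, once all $k2^\ell$ configurations are shown reachable and pairwise distinguishable, the left-derivative characterization recalled before Proposition~1 (equivalently, Myhill--Nerode) gives that the minimal complete DFA of $KaL$ has exactly $k2^\ell$ states; and since every complete DFA for a language has at least as many states as the minimal one, each complete DFA recognizing $KaL$ has at least $k2^\ell$ states, as claimed.
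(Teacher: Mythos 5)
Your outline follows the same strategy as the paper's proof --- view $KaL$ through the glued nondeterministic automaton, encode left derivatives as pairs $(p,S)$ with $p$ a $K$-state and $S$ a set of $L$-states, and show all $k2^\ell$ pairs are reachable and pairwise distinguishable --- but it stops exactly where the proof begins. The proposition is an existence statement, and you never define the witness languages: the transition actions are left as design goals (``arrange the dynamics of $L$ so that\dots'', ``choosing the transition actions \dots is where the care \dots is concentrated''). What you have is a plan, not a proof, and the part you defer is precisely the part that carries the mathematical content. The difficulty you correctly identify --- every letter acts simultaneously on the $K$-component and on \emph{all} $L$-threads, and a new thread is spawned whenever the $K$-component sits at its final state and an $a$ is read, whether you want it or not --- is not resolved by the sketch. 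In particular, your recipe for separating $(p,S)$ from $(p',S)$, namely ``read an $L$-word that carries the freshly spawned thread to acceptance while leaving the two $L$-subsets otherwise identical,'' is incomplete as stated: the old threads, common to both configurations, are also moved by that word, and if any of them lands on the final state of $L$ then both words are accepted and nothing is separated. Some explicit mechanism must first neutralize the old threads, and none is provided.

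For comparison, the paper resolves the interference with a three-letter alphabet whose actions are ``orthogonal'': $b$ cycles the $k$ states of $K$ and fixes every state of $L$; $a$ cycles the $\ell$ states of $L$ and \emph{resets} $K$ to its initial state (so a block of appended $a$'s can cause at most one unwanted spawn); $c$ fixes $K$ and collapses every state of $L$ to one fixed state. Subsets are then separated by the explicit word $a^{\ell-1-s}$ (the action of $a$ on $L$ is a permutation, so exactly the thread at the witnessing position $s$ reaches the final state), and $K$-states with equal subsets are separated by $cb^{k-1-t}a^{\ell}$: the letter $c$ synchronizes the old threads of both configurations, $b^{k-1-t}$ brings one $K$-state but not the other to the final state of $K$, and $a^{\ell}$ converts that difference into a freshly spawned accepting thread. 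Reachability of every pair $(S,t)$ is likewise witnessed by explicit words $b^{k-1}a^{s_1-s_2}b^{k-1}\cdots b^{k-1}a^{s_m+1}b^{t}$. Note also that all separations must ultimately be verified against the \emph{definition} of $KaL$, i.e. by checking which factorizations $xay$ with $x\in K$, $y\in L$ exist, since unintended factorizations (e.g. through prefixes ending inside the appended suffix) must be ruled out; this bookkeeping is invisible in the thread picture you rely on. Until concrete automata with such properties are exhibited and these verifications carried out, the statement remains unproved.
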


\begin{proof}
 Let $A=\{a,b,c\}$ and
let $\mathcal A= (\,\{p_0,\dots,p_{k-1}\}, A, \cdot, p_0, \{p_{k-1}\}\,)$ 
where
$$p_i\cdot a =p_0,\ p_i\cdot b= p_{i'} \text{ where } i'\equiv i+1\ (\text{mod } k),\ 
p_i\cdot c =p_i \text{ for all } i=0,\dots,k-1\,.$$
Note that $\mathcal A$ accepts the language
$$K=\mathsf L(\mathcal A)
=\{\,uv\mid u\in(A^*a)^*,\ v\in \{b,c\}^*,\text{ and }
|v|_b \equiv k-1\ (\text{mod }k)\,\}\ .$$
Similarly, we define 
$\mathcal B=\{\,\{q_0,\dots,q_{\ell-1}\}, A, \cdot, q_0,\{q_{\ell-1}\}\,\}$
where
$$q_j\cdot a=q_{j'} \text{ where } j'\equiv j+1\ (\text{mod }\ell\,),\ 
q_j\cdot b=q_j,\ q_j\cdot c=q_1 \text{ for all } j=0,\dots,\ell-1\,.$$
Clearly, for $L=\mathsf L(\mathcal B)$, we have
$$L\cap \{a,b\}^*=
\{\,u\in \{a,b\}^*\mid |u|_a\equiv \ell-1\ (\text{mod } \ell\,)\,\}\ .
$$
Both automata $\mathcal A$ and $\mathcal B$ are minimal.
\smallskip

We define, for all $u\in \{a,b\}^*$, the set
$$S(u)=\{\,i\in \{0,\dots,\ell-1\} \mid u=vaw \text{ such that } v\in K 
\text{ and } 
i\equiv |w|_a\ (\text{mod } \ell)\,\}$$
and the numbers
$$T(u)= \text{ the greatest } m \text{ such that } b^m \text{ is
a suffix of } u $$
and
$$t(u)\in\{0,\dots,k-1\},\ t(u)\equiv T(u)\ (\text{mod } k\,)\,.$$
Let $u,v\in \{a,b\}^*$ be such that $S(u)\not=S(v)$. 
Let $s\in S(u)\setminus
S(v)$ (the case $s\in S(v)\setminus S(u)$ can be treated similarly).
Then $ua^{\ell-1-s}\in KaL$ but $va^{\ell-1-s}\not\in KaL$. Then in each
complete DFA recognizing $KaL$ with the initial state $r_0$ we have
that
$$r_0\cdot u\not= r_0\cdot v\, .$$

Now let $u,v\in \{a,b\}^*$ be such that $S(u)=S(v)$ and
$t = t(u)>t(v)$. Then, for $w=cb^{k-1-t}a^{\ell}$,
we have
$uw\in KaL$ and $vw\not\in KaL$.
Again, in each
complete DFA recognizing $KaL$ with the initial state $r_0$ we have
that $$r_0\cdot u\not= r_0\cdot v\, .$$
For an arbitrary subset $S=\{s_1,\dots,s_m\}$ of $\{0,\dots,\ell-1\}$,
where $s_1>\dots>s_m$, and $t\in\{0,\dots,k-1\}$ there exists a word
$$u=b^{k-1}a^{s_1-s_2}b^{k-1}a^{s_2-s_3}b^{k-1}\dots b^{k-1}
a^{s_m+1}b^t$$
such that $S(u)=S$ and $t(u)=t$.

Therefore each
complete DFA recognizing $KaL$ has at least $k2^\ell$ states.
\end{proof}

\section{Recognizing by Monoids} 

Let $K$ and $L$ be languages over a finite alphabet $A$ and let
$M$ and $N$ be their syntactic monoids.
In this section we will compare 
\begin{itemize}
\item 
the size of the Sch\"utzenberger product $M\Diamond N$
of monoids $M$ and $N$,
\item
the cardinality of the image of $A^*$ in the homomorphism $\mu_a$
from   $A^*$
into  $M\Diamond N$ recognizing the language $KaL$,
\item
the size of the syntactic monoid of the language $KaL$.
\end{itemize}

Let $M$ and $N$ be finite monoids. Their {\it Sch\"utzenberger product}
$M\Diamond N$ is the set of all 2$\times$2 matrices 
$P$ where
$P_{2,1}=\emptyset,\ P_{1,1}\in M,\ P_{2,2}\in N$ and
$P_{1,2}\subseteq M\times N$
equipped with the multiplication
$$(PQ)_{1,1}=P_{1,1}Q_{1,1},\ (PQ)_{2,2}=P_{2,2}Q_{2,2}\quad
\text{ and } 
$$$$
(PQ)_{1,2}= \{\,(P_{1,1}x,y) \mid (x,y)\in Q_{1,2}\,\}\ \cup
\{\,(z,tQ_{2,2}) \mid (z,t)\in P_{1,2}\,\}\ .$$
It is well known that this operation is associative.
This product was introduced by Sch\"utzenberger and by Straubing for
an arbitrary finite family of monoids. Basic results are also due to
Reutenauer and Pin -- see \cite{pi-kniha} Theorems 1.4 and 1.5 in Chapter 5.
Clearly, if $|M|=m$ and $|N|=n$, then $|M\Diamond N|=mn2^{mn}$.
\medskip

Recall that the {\it syntactic congruence} of the language $R\subseteq A^*$
is a relation ${\sim}_R$ on $A^*$ defined by:
$$u \sim_R v\quad \text{ if and only if }\quad (\ \forall\ p,q\in A^*\ )\ 
(\ puq\in R\, \Leftrightarrow\,  pvq\in R\ )\,.$$ 
The {\it syntactic monoid} of $R$ is the quotient monoid
$A^*/{\sim_R}$. It is the smallest monoid recognizing the language $R$.

Let $A$ be a finite alphabet and let $\varphi : A^* \rightarrow M,\ 
\psi : A^* \rightarrow N$ be homomorphisms. Let $S\subseteq M,\ 
T\subseteq N$ and let $K=\varphi^{-1}(S),\ L=\psi^{-1}(T)$, i.e.
the language $K$ is {\it recognized} by $M$ using $\varphi$ and
$S$, and similarly for the language $L$. One can take the mappings
$\varphi$ and $\psi$ surjective.

For $a\in A$, we define a mapping $\mu_a : A^* \rightarrow M\Diamond N$ by
$$
(\mu_a(u))_{1,1}=\varphi(u),\ (\mu_a(u))_{2,2}=\psi(u)\quad
\text{ and }$$ 
$$
(\mu_a(u))_{1,2} =\{\, (\varphi(u'),\psi(u'')) \mid
u=u'au'',\ u',u''\in A^*\,\}\,.$$
It is easy to see that it is a homomorphism and that the language
$KaL$ is recognized by $M\Diamond N$ using
$\mu_a$ and $\{\,P\in M\Diamond N \mid  P_{1,2}\cap S\times T \not=
\emptyset\,\}.$
\smallskip

Of course, the language $KaL$ is also recognized by
$\mu_a(A^*)$ which can be much smaller than the whole  $M\Diamond N$.
Moreover the syntactic monoid of the language $KaL$ is a homomorphic
image of the monoid $\mu_a(A^*)$. Its size can be much smaller than 
the cardinality of the monoid $\mu_a(A^*)$.
\smallskip

First we present, for arbitrary $m$ and $n$, an example where 
the mapping $\mu_a$ is onto. Thus the bound $mn2^{mn}$ for $\mu_a(A^*)$
is sharp.

\begin{proposition}\label{prop-image-sharp} 
For arbitrary $m$ and $n$, there exist languages
 $K$ and $L$ with syntactic monoids
$M$ and $N$ and homomorphisms
$\varphi : u\mapsto u{\sim}_K,\ \psi : u \mapsto u{\sim}_L$, $ u\in A^*$, 
such that the mapping $\mu_a : A^* \rightarrow M\Diamond N$
is surjective.
\end{proposition}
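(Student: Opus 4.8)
The plan is to realise $M$ and $N$ as finite \emph{groups}, because invertibility gives complete control over the partial products that build up the $(1,2)$-component of $\mu_a$. Concretely I would take $A=\{a,b,c\}$, let $M=C_m$ and $N=C_n$ be cyclic, and define $\varphi,\psi:A^*\to M,N$ by sending $b$ to a generator of $M$ and to $1_N$, sending $c$ to $1_M$ and to a generator of $N$, and sending $a$ to $1_M$ and $1_N$. With $K=\varphi^{-1}(\{1_M\})$ and $L=\psi^{-1}(\{1_N\})$, computing the syntactic congruences (one checks $u\sim_K v$ iff $|u|_b\equiv|v|_b\ (\mathrm{mod}\ m)$, and symmetrically for $L$) shows that the syntactic monoids are exactly $M$ and $N$, so $|M|=m$ and $|N|=n$. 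Two features of this choice are essential: the letter $a$ is neutral ($\varphi(a)=1_M$, $\psi(a)=1_N$), and the restriction of the pair map $(\varphi,\psi)$ to the $a$-free words $\{b,c\}^*$ is onto $M\times N$, since $b^ic^j$ realises every pair of powers of the two generators.

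Next I would analyse $\mu_a$ on words $u=y_0\,a\,y_1\,a\cdots a\,y_r$ with each $y_i\in\{b,c\}^*$ $a$-free, so the occurrences of $a$ in $u$ are exactly the $r$ displayed ones. Writing $\alpha_i=\varphi(y_i)$ and $\beta_i=\psi(y_i)$, neutrality of $a$ makes the $j$-th split contribute the pair $(\alpha_0\cdots\alpha_{j-1},\,\beta_j\cdots\beta_r)$. Hence, setting $\lambda_j=\alpha_0\cdots\alpha_{j-1}$ and $\rho_j=\beta_j\cdots\beta_r$, one gets $(\mu_a(u))_{1,2}=\{(\lambda_j,\rho_j):1\le j\le r\}$, together with $(\mu_a(u))_{1,1}=\lambda_r\alpha_r$ and $(\mu_a(u))_{2,2}=\beta_0\rho_1$.

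The main step is then to hit an arbitrary $P\in M\Diamond N$. Given $P$ with $P_{1,1}=x$, $P_{2,2}=y$ and $P_{1,2}=\{(\xi_1,\eta_1),\dots,(\xi_r,\eta_r)\}$ (its elements listed without repetition; the case $P_{1,2}=\emptyset$ is handled by $r=0$, $u=y_0$), I prescribe $\lambda_j=\xi_j$ and $\rho_j=\eta_j$. Using that $M,N$ are groups, the increments are forced and solvable: $\alpha_0=\lambda_1$ and $\alpha_{j-1}=\lambda_{j-1}^{-1}\lambda_j$ for $2\le j\le r$, and $\beta_r=\rho_r$ and $\beta_j=\rho_j\rho_{j+1}^{-1}$ for $1\le j\le r-1$. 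Crucially, the ``end'' parameters $\alpha_r$ and $\beta_0$ occur in none of the $\lambda_j,\rho_j$, so they remain free; I set $\alpha_r=\lambda_r^{-1}x$ and $\beta_0=y\rho_1^{-1}$ to obtain the prescribed $(1,1)$- and $(2,2)$-components. Because $(\varphi,\psi)$ is onto on $a$-free words, each pair $(\alpha_i,\beta_i)$ is realised by some $y_i\in\{b,c\}^*$, and the resulting $u$ satisfies $\mu_a(u)=P$.

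I expect the main obstacle to be bookkeeping rather than concept: one must verify that the set $\{(\lambda_j,\rho_j)\}$ equals $P_{1,2}$ exactly (which is why the $(\xi_j,\eta_j)$ are listed distinctly), that the free parameters $\alpha_r,\beta_0$ genuinely do not interfere with the constraints fixing $P_{1,2}$, and that the degenerate cases $P_{1,2}=\emptyset$ and $m=1$ or $n=1$ behave correctly. The whole argument hinges on group invertibility, which is precisely what turns the prescription of arbitrary partial products into a solvable system; this is the point worth emphasising.
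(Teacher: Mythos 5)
Your proposal is correct and is essentially the paper's own proof: the paper uses exactly the same languages $K=\{\,u\in A^*\mid |u|_b\equiv 0\ (\mathrm{mod}\ m)\,\}$ and $L=\{\,u\in A^*\mid |u|_c\equiv 0\ (\mathrm{mod}\ n)\,\}$ over $A=\{a,b,c\}$, with syntactic monoids $\mathbb{Z}_m$ and $\mathbb{Z}_n$ and with $a$ neutral under both syntactic morphisms. The only difference is in the bookkeeping of the surjectivity check: the paper writes down a single explicit preimage word after ordering the target set $P_{1,2}$ by first coordinate and decreasing second coordinates (so that all exponents of letters are non-negative), whereas you solve the telescoping system for the increments $\alpha_i,\beta_i$ via group inverses, which lets you process the pairs in arbitrary order.
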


\begin{proof} Let $A=\{a,b,c\}$,
let $m$ and $n$ be natural numbers and let
$$K=\{\,u\in A^*\mid |u|_b\equiv 0\!\!\! \pmod m\,\} \text{ and }
L=\{\,u\in A^*\mid |u|_c \equiv 0\!\!\! \pmod n\,\}\,.$$
The syntactic monoids of $K$ and $L$ are the additive groups
$\mathbb Z_m$ and  $\mathbb Z_n$ and the syntactic homomorphisms are
given by
$$\varphi(u)=[|u|_b]_m\in \mathbb Z_m \quad \text{ and }\quad
\psi(u)=[|u|_c]_n\in \mathbb Z_n\,.$$
Let $k\in\{0,\dots,m-1\}\,, \ell\in \{0,\dots,n-1\}$ and
$O\subseteq \{0,\dots,m-1\}\times \{0,\dots,n-1\}$ be arbitrary.
We will find $u\in A^*$ such that
$$|u|_b\equiv k\ (\text{mod } m),\ 
|u|_c \equiv \ell\ (\text{mod } n),\text{ and } 
$$
$$
\{\,(p,q)\in\{0,\dots,m-1\}\times \{0,\dots,n-1\} \mid
$$$$
|u'|_b\equiv p\ (\text{mod } m),\ 
|u''|_c \equiv q\ (\text{mod } n),\,
u=u'au'',\,u',u''\in A^*\,\}
=O\,.$$
Let
$$O=\{\,(0,j_{0,1}),\dots,(0,j_{0,p_0}),
(1,j_{1,1}),\dots,(1,j_{1,p_1}),
$$
$$\dots$$
$$
(m-1,j_{m-1,1}),\dots,(m-1,j_{m-1,p_{m-1}})\,  \}\,,$$
where
$$n-1\geq j_{0,1}>\dots>j_{0,p_0}\geq 0,\dots,
n-1\geq j_{m-1,1}>\dots>j_{m-1,p_{m-1}}\geq 0\,.$$
We put
$$u_0=c^{n-j_{0,1}} a c^{j_{0,1}-j_{0,2}} a \dots a 
c^{j_{0,p_0-1}-j_{0,p_0}} a c^{j_{0,p_0}}\,,$$
$$u_1=c^{n-j_{1,1}} a c^{j_{1,1}-j_{1,2}} a \dots a 
c^{j_{1,p_1-1}-j_{1,p_1}} a c^{j_{1,p_1}},$$
$$\dots$$
$$u_{m-1}=c^{n-j_{m-1,1}} a c^{j_{m-1,1}-j_{m-1,2}} a \dots a 
c^{j_{m-1,p_{m-1}-1}-j_{m-1,p_{m-1}}} a c^{j_{m-1,p_{m-1}}}$$
where $u_i=1$ if $p_i=0$, for $i=0,\dots,m-1$.
Finally, putting
$$u=c^\ell u_0bu_1b\dots u_{m-1}b\cdot b^k\,,$$
we see that this word has all desired properties.
\end{proof}

We used GAP to calculate the sizes of syntactic monoids from the last
proof for $m\in \{2,3,4\}$ and $n=2$. The numbers were
61, 379 and 2041. They are of the form $mn(2^{mn}-1)+1$. This led us
to the following two results.

\begin{proposition}\label{prop-groups+}
Let $K$ and $L$ be languages over a finite alphabet $A$ 
with syntactic monoids $M$ and $N$,
let $|M|=m,\,|N|=n$ and let $a\in A$.
Then the size of the syntactic monoid of $KaL$
is at most $mn(2^{mn}-1)+1$.
\end{proposition}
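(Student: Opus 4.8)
The plan is to work entirely inside the Sch\"utzenberger product. By the discussion preceding the statement, $KaL$ is recognized by the homomorphism $\mu_a\colon A^*\to M\Diamond N$ together with the set $E=\{\,P\in M\Diamond N\mid P_{1,2}\cap(S\times T)\neq\emptyset\,\}$, and the syntactic monoid of $KaL$ is a homomorphic image of $\mu_a(A^*)\subseteq M\Diamond N$. Hence its cardinality equals the number of distinct functions $F_P$, where $F_P(X,Y)=1$ exactly when $XPY\in E$, as $P$ ranges over $\mu_a(A^*)$ and the contexts $X=\mu_a(p),\,Y=\mu_a(q)$ range over $\mu_a(A^*)$. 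Since $|M\Diamond N|=mn2^{mn}$, this already yields the crude bound $mn2^{mn}$, and the entire point is to save the extra $mn-1$.

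First I would expand the product rule to obtain a usable formula for the only coordinate that matters. A direct computation from the definition of the multiplication gives
\[
(XPY)_{1,2}=\{(X_{1,1}P_{1,1}x,\,y)\mid (x,y)\in Y_{1,2}\}\,\cup\,\{(X_{1,1}x,\,yY_{2,2})\mid (x,y)\in P_{1,2}\}\,\cup\,\{(z,\,tP_{2,2}Y_{2,2})\mid (z,t)\in X_{1,2}\}.
\]
Intersecting with $S\times T$, membership in $E$ becomes a disjunction of three predicates, the first depending on $P$ only through $P_{1,1}$, the second only through $P_{1,2}$, the third only through $P_{2,2}$; combinatorially these record whether the marked letter $a$ is chosen in the left context $p$, inside $u$, or in the right context $q$. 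Thus $F_P$ is determined by the triple $(P_{1,1},P_{2,2},\beta_{P_{1,2}})$, where $\beta_{P_{1,2}}$ is the middle predicate, and this is the structure I would exploit.

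The saving should come from the top of the lattice of off-diagonal entries. My plan is to single out the $mn$ matrices with $P_{1,2}=M\times N$, one for each diagonal $(P_{1,1},P_{2,2})$, and to show that they all collapse to a single element of the syntactic monoid. The guiding idea is that once $P_{1,2}$ is the full set, the middle predicate is already maximal and ought to absorb the other two, so that $F_P$ no longer depends on the diagonal; merging these $mn$ matrices into one class turns $mn2^{mn}$ into $mn(2^{mn}-1)+1$. This is exactly the mechanism behind the numbers $61,379,2041$ computed by \textsf{GAP} for $M=\mathbb Z_m$, $N=\mathbb Z_n$, where the full off-diagonal makes the middle predicate \emph{identically} true, so that the top class is a zero of the syntactic monoid.

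The main obstacle is precisely this absorption step: proving that for $P_{1,2}=M\times N$ the truth value of $XPY\in E$ is independent of $P_{1,1}$ and $P_{2,2}$ for all admissible contexts. The delicate point is that, in an \emph{image} context $Y=\mu_a(q)$, the entry $Y_{2,2}$ and the pairs appearing in $Y_{1,2}$ arise from a common factorization of $q$ (and symmetrically for $X$), and I would need to use this coupling to force the full middle predicate to fire whenever the first or third predicate does. I expect the clean collapse to hold only for group-like $M$ and $N$, where the bound is tight; for the remaining monoids the middle predicate already identifies far more off-diagonal entries than the diagonal distinguishes, so I anticipate having to separate the two regimes -- the one where the maximal $\beta$ is saturating and the one where it is not -- and verify that in either case the total number of distinct $F_P$ never exceeds $mn(2^{mn}-1)+1$.
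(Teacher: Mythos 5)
Your framework (identifying the syntactic monoid of $KaL$ with the set of distinct context functions $F_P$ as $P$ ranges over $\mu_a(A^*)$) is sound, and your collapsing mechanism for the full off-diagonal entry is exactly the paper's case (i): when $M$ and $N$ are groups and $P_{1,2}=M\times N$, one has $(XPY)_{1,2}\supseteq X_{1,1}M\times N\,Y_{2,2}=M\times N$, so $XPY$ lies in the accepting set whenever $S,T\neq\emptyset$; hence all (at most $mn$) full matrices are syntactically equivalent and the count drops from $mn2^{mn}$ to $mn(2^{mn}-1)+1$. Incidentally, this step is simpler than you fear: no coupling between $Y_{1,2}$ and $Y_{2,2}$ inside an image context is needed, because left and right translations by group elements are bijections, so fullness of $P_{1,2}$ propagates to $XPY$ for \emph{arbitrary} contexts.

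The genuine gap is the non-group case, which you explicitly defer (``I anticipate having to \dots verify'') without giving an argument --- and the mechanism you hope for there, namely extra collapse among the $F_P$, does not work in general. Without invertibility, $X_{1,1}M\times N\,Y_{2,2}$ need not be all of $M\times N$, and the contribution $\{\,(X_{1,1}P_{1,1}x,y)\mid (x,y)\in Y_{1,2}\,\}$ to $(XPY)_{1,2}$ genuinely depends on $P_{1,1}$, so two full matrices with different diagonals can have different context functions; indeed, for $K=B^*$, $L=C^*$ no full matrix lies in the image at all, so there is nothing to collapse. The paper's case (ii) uses a different idea: a saving in the size of the image $\mu_a(A^*)$ itself rather than in the quotient. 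If $M$ is not a group, pick $s\in M$ with no right inverse (in a finite monoid, right-invertible implies invertible). If $(s,1)\in(\mu_a(u))_{1,2}$, i.e.\ $u=u'au''$ with $\varphi(u')=s$ and $\psi(u'')=1$, then $(\mu_a(u))_{1,1}=s\varphi(a)\varphi(u'')\neq 1$. Hence none of the $n2^{mn-1}$ matrices containing $(s,1)$ at position $(1,2)$ and having $1$ at position $(1,1)$ can lie in the image, and since $n2^{mn-1}\geq mn-1$ this gives $|\mu_a(A^*)|\leq mn2^{mn}-(mn-1)$ directly, with no collapse needed. Your proof requires this argument (or some substitute for it) to be complete.
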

 
\begin{proof}
(i) Suppose first that both $M$ and $N$ are groups.
Let $u\in A^*$ be such that $(\mu_a(u))_{1,2} = M\times N$.
Then also, for each $p,q\in A^*$, it is the case that  
$(\mu_a(puq))_{1,2} = M\times N$.
Therefore, each pair $(u,v)\in A^*\times A^*$ with
$(\mu_a(u))_{1,2} = (\mu_a(v))_{1,2} = M\times N$
is in the syntactic congruence of the language $KaL$.
\smallskip

(ii) Suppose that the monoid $M$ is not a group (the case $N$ not being
a group could be treated in a similar way).
Let $s\in M$ be without an inverse element.  Then there is no $t\in M$ 
with $st=1$. Indeed, such $t$ would imply that $u \mapsto us,\ u\in M$, 
is one-to-one
and due to the finiteness of $M$ we have that $\{\,us\mid u\in M\,\}=M$. 
Thus there would
be $u\in M$ such that $us=1$ and $t=us\cdot t=u\cdot st=u$ -- a~contradiction.

Let $u\in A^*$ and let $(s,1)\in (\mu_a(u))_{1,2}$. Thus there exist
$u',u''\in A^*$ such that $u=u'au''$ and $\varphi(u')=s,\,\psi(u'')=1$.
Consequently
$(\mu_a(u))_{1,1}=\varphi(u)=s \varphi(a) \varphi(u'')\not=1$.
There are $mn2^{mn-1}$ matrices in $M\Diamond N$ not having the element
$(s,1)$ in the set at position $(1,2)$, and 
$(m-1)n2^{mn-1}$ matrices in $M\Diamond N$ having the element
$(s,1)$ in the set at position $(1,2)$ and not having $1$ at position $(1,1)$.

Consequently, the size of the syntactic monoid of $KaL$ is less
or equal the cardinality of $\mu_a(A^*)$ which is at most
$mn2^{mn-1}+ (m-1)n2^{mn-1}$. The gap between $mn2^{mn}$ and the last number
is at least the needed value $mn-1$.
\end{proof}

Next we show that the estimate from Proposition~\ref{prop-groups+} is exact.

\begin{proposition}\label{prop-synt-monoids-sharp}
For arbitrary $m$ and $n$, there exist languages $K$ and $L$ with
syntactic monoids $M$ and $N$, $|M|=m,\,|N|=n$, such that
the size of the syntactic monoid of $KaL$
is exactly $mn(2^{mn}-1)+1$.
\end{proposition}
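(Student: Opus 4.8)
The plan is to reuse the very languages $K$ and $L$ from the proof of Proposition~\ref{prop-image-sharp}, namely $K=\{u\in A^*\mid |u|_b\equiv 0\pmod m\}$ and $L=\{u\in A^*\mid |u|_c\equiv 0\pmod n\}$, with syntactic monoids the groups $M=\mathbb Z_m$ and $N=\mathbb Z_n$. By Proposition~\ref{prop-image-sharp} the homomorphism $\mu_a$ is onto, so $\mu_a(A^*)=M\Diamond N$ has exactly $mn2^{mn}$ elements. Since $S\times T=\{(1,1)\}$, a word $u$ lies in $KaL$ exactly when the pair of identities $(1,1)$ belongs to $(\mu_a(u))_{1,2}$; consequently $u\sim_{KaL}v$ depends only on $\mu_a(u)$ and $\mu_a(v)$, and the syntactic monoid of $KaL$ is precisely the quotient of $M\Diamond N$ by the congruence $P\approx Q$ defined by: for all $X,Y\in M\Diamond N$, one has $(1,1)\in(XPY)_{1,2}$ iff $(1,1)\in(XQY)_{1,2}$. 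Computing the size of the syntactic monoid thus reduces to counting the $\approx$-classes.

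Let $\mathrm{Full}=\{P\mid P_{1,2}=M\times N\}$; there are exactly $mn$ such matrices. Part~(i) of Proposition~\ref{prop-groups+} already shows that $\mathrm{Full}$ forms a single $\approx$-class. Hence to reach the exact value $mn(2^{mn}-1)+1=(mn2^{mn}-mn)+1$ it suffices to prove the following \emph{separation claim}: any two distinct matrices $P\neq Q$ not both in $\mathrm{Full}$ satisfy $P\not\approx Q$. Granting the claim, the $mn$ matrices of $\mathrm{Full}$ merge into one class while each of the remaining $mn2^{mn}-mn$ matrices forms a class by itself, giving the required count.

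To establish the claim I would exploit that $M$ and $N$ are groups, making the relevant equations uniquely solvable. One first computes $(XPY)_{1,2}$ as the union of a ``first'' term $\{(X_{1,1}P_{1,1}x',y')\mid (x',y')\in Y_{1,2}\}$, a ``middle'' term $\{(X_{1,1}x,\,yY_{2,2})\mid (x,y)\in P_{1,2}\}$, and a ``third'' term $\{(z,\,tP_{2,2}Y_{2,2})\mid (z,t)\in X_{1,2}\}$. Taking $X_{1,2}=Y_{1,2}=\emptyset$ leaves only the middle term, so $(1,1)\in(XPY)_{1,2}$ iff $(X_{1,1}^{-1},Y_{2,2}^{-1})\in P_{1,2}$; choosing $X_{1,1}=s^{-1},\,Y_{2,2}=t^{-1}$ thus tests membership of any prescribed pair $(s,t)$ in $P_{1,2}$, which separates $P$ and $Q$ whenever $P_{1,2}\neq Q_{1,2}$ (in particular whenever exactly one of $P,Q$ lies in $\mathrm{Full}$). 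It remains to separate $P,Q$ with a common entry $R=P_{1,2}=Q_{1,2}\subsetneq M\times N$ but distinct diagonals, say $P_{1,1}\neq Q_{1,1}$. Here I would fix a pair $(s,t)\notin R$, set $X_{1,1}=s^{-1},\,Y_{2,2}=t^{-1}$ so the middle term never yields $(1,1)$ for either matrix, kill the third term by $X_{1,2}=\emptyset$, and activate the first term by $Y_{1,2}=\{(c,1)\}$ with $c=(X_{1,1}P_{1,1})^{-1}$; this forces $(1,1)\in(XPY)_{1,2}$, while abelianness and $P_{1,1}\neq Q_{1,1}$ give $X_{1,1}Q_{1,1}c\neq 1$, keeping $(1,1)\notin(XQY)_{1,2}$. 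The case $P_{2,2}\neq Q_{2,2}$ is symmetric, using the third term with $X_{1,2}=\{(1,d)\}$, $Y_{1,2}=\emptyset$, and $d=(P_{2,2}Y_{2,2})^{-1}$.

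The main obstacle is precisely this last separation of matrices sharing their $(1,2)$-entry: one must switch on the diagonal-sensitive term while simultaneously neutralising the $(1,2)$-sensitive middle term, and it is this balancing act that forces the explicit, group-theoretic choice of the off-diagonal test entries. Everything else — surjectivity of $\mu_a$, the collapse of $\mathrm{Full}$, and the final bookkeeping — is either inherited from the two preceding propositions or routine.
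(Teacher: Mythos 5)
Your proposal is correct and is essentially the paper's own argument: the same languages from Proposition~\ref{prop-image-sharp}, the collapse of the matrices with full $(1,2)$-entry exactly as in part (i) of the proof of Proposition~\ref{prop-groups+}, and the same two-case separation (distinct $(1,2)$-entries, then equal non-full $(1,2)$-entries with distinct diagonal entries). The only difference is presentational: you phrase the separating contexts as matrices $X,Y\in M\Diamond N$, justified by the surjectivity of $\mu_a$, whereas the paper writes down explicit word contexts ($p=b^{m-k}$, $q=c^{n-\ell}$, resp.\ $q=c^{n-\ell}b^{\beta}a$) which are precisely realizations of your matrices under $\mu_a$.
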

 
\begin{proof}
We again consider the languages $K$ and $L$ from the proof of 
Proposition~\ref{prop-image-sharp}.

(i) Let $u,v\in A^*,\ ([k]_m,[\ell]_n)\in (\mu_a(u))_{1,2} \setminus
(\mu_a(v))_{1,2},\ k\in\{0,\dots,m-1\},\,\ell\in\{0,\dots,n-1\}$.

Let $$p=b^{m-k},\, q=c^{n-\ell}\,.$$
Then $puq\in KaL$ and $pvq\not\in KaL$.

\medskip
(ii) Let $u,v\in A^*,\ ([k]_m,[\ell]_n) \not\in (\mu_a(u))_{1,2} =
(\mu_a(v))_{1,2},\ k\in\{0,\dots,m~-~1\},\,\\
\ell\in\{0,\dots,n-1\}$.
Let $(\mu_a(u))_{1,1}\not=(\mu_a(v))_{1,1}$.
(The case  $(\mu_a(u))_{2,2}\not=(\mu_a(v))_{2,2}$ could be treated
analogously).

Let $p=b^{m-k},\, \text{ let } \alpha \text{ be a natural number such that }
\beta = \alpha m-|b^{m-k}u|_b\geq 0,\text{ and let } 
q=c^{n-\ell} b^\beta a$.\\
Then $puq\in KaL$ and $pvq\not\in KaL$.
\end{proof}

The following example shows that the cardinalities of $M\Diamond N$,
the cardinality of $\mu_a(A^*)$ and the size of the syntactic monoid 
can be three quite different numbers.
 
\begin{example}
Let $a\in A$, let $B,C\subsetneqq A$ and consider the language $B^*aC^*$.
Syntactic monoids of both $B^*$ and $C^*$ are isomorphic to the
two element monoid $2=\{0,1\}$ having a neutral element 1 and
a zero element 0. Moreover, for $a\in A$,
$\varphi(a)= 1$ if and only if $a\in B$, and
$\psi(a)= 1$ if and only if $a\in C$.
Finally $S=T=\{1\}.$

Clearly, the cardinality of $2\Diamond 2$ is 
$2\cdot 2 \cdot 2^{2\cdot 2}=64$.

Let $A=\{a,b,c,d\},\,B=\{a,b\}$ and $C=\{a,c\}$. 
One can calculate that $|\mu_a(A^*)|=22$.
Finally, it is well known and easy to see that
the syntactic monoid of $B^*aC^*$ is isomorphic to the 8-element
monoid of Boolean uppertriangular matrices of order 2.
\end{example}

\medskip
We will try to estimate the number $|\mu_a(A^*)|$ using the structures
of monoids $M$ and $N$. The first little step concerns very
special monoids and certain chains of their elements. 

Green's relations are a basic tool in semigroup theory:
define on an arbitrary monoid $O$ the quasiorders $\leq_\mathcal R$,
$\leq_\mathcal L$ and $\leq_\mathcal J$ as follows:
$$p\leq_\mathcal R q \text{ iff } p=qr \text{ for some } r,
\quad
p\leq_\mathcal L q \text{ iff } p=sq \text{ for some } s\,,
$$$$
\quad \text{ and }
p\leq_\mathcal J q \text{ iff } p=sqr \text{ for some } r,s
\,.$$
A monoid $O$ is $\mathcal J${\it -trivial} if 
$p\leq_\mathcal J q \leq_\mathcal J p$ implies that $p=q$.
For each $u\in A^*$, we define $\mathsf{c}(u)$ (the {\it content} of $u$)
as the set of all letters
of $u$.

\begin{proposition}\label{prop-chains}
Let $M$ and $N$ be finite $\mathcal J$-trivial monoids having cardinalities
$m$ and $n$.
Let the number of elements in
a longest strict $\leq_\mathcal R$-chain in $M$ is $\rho$ and the
number of elements in
a longest strict $\leq_\mathcal L$-chain in $N$ is $\lambda$. Let $a\in A$,
$\varphi : A^* \rightarrow M,\ 
\psi : A^* \rightarrow N$ be homomorphisms.
Then the number of elements of each set of $(\mu_a(u))_{1,2},\ u\in A^*$,
is less or equal to $\rho+\lambda-1$ (which is $\leq m+n=1$).
In particular,
$$|\mu_a(A^*)|\leq mn(\binom{mn}{0}+\binom{mn}{1}+\dots+
\binom{mn}{\rho+\lambda-1})\,.$$
\end{proposition}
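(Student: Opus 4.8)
The plan is to fix a word $u\in A^*$ and read off the set $(\mu_a(u))_{1,2}$ from the occurrences of the letter $a$ in $u$. If $a$ occurs $r$ times, writing $u=u'_j\,a\,u''_j$ for the splitting at the $j$-th occurrence (numbered from left to right, $j=1,\dots,r$) exhausts all decompositions $u=u'au''$, so
$$(\mu_a(u))_{1,2}=\{\,(\varphi(u'_j),\psi(u''_j))\mid j=1,\dots,r\,\}\,.$$
The decisive observation is a monotonicity of the two coordinates: for $j<k$ one has $u'_k=u'_j\,a\,w$ and $u''_j=w\,a\,u''_k$, where $w$ is the factor lying strictly between the $j$-th and the $k$-th occurrence of $a$. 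Applying $\varphi$ and $\psi$ gives $\varphi(u'_k)\leq_\mathcal R\varphi(u'_j)$ and $\psi(u''_j)\leq_\mathcal L\psi(u''_k)$. Hence, as $j$ runs from $1$ to $r$, the first coordinates form a weakly $\leq_\mathcal R$-descending sequence and the second coordinates a weakly $\leq_\mathcal L$-ascending sequence.

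Next I would invoke $\mathcal J$-triviality. In a finite monoid $\mathcal J$-triviality forces $\mathcal R$-triviality and $\mathcal L$-triviality, since $\mathcal R$ and $\mathcal L$ refine $\mathcal J$; thus $\leq_\mathcal R$ on $M$ and $\leq_\mathcal L$ on $N$ are genuine partial orders (antisymmetric). Consequently the first coordinates of the pairs, lying on a strict $\leq_\mathcal R$-chain, take at most $\rho$ distinct values, and the second coordinates take at most $\lambda$ distinct values.

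The heart of the argument is a short counting step for the pairs $(x_j,y_j)=(\varphi(u'_j),\psi(u''_j))$. At each step $j\to j+1$, the coordinate $x_j$ either stays equal or strictly $\leq_\mathcal R$-decreases, and $y_j$ either stays equal or strictly $\leq_\mathcal L$-increases; by antisymmetry neither coordinate can ever return to a value it has left, so once the pair changes it never recurs. Therefore the number of steps at which the pair actually changes is at most the number of strict decreases of $x$ plus the number of strict increases of $y$, namely at most $(\rho-1)+(\lambda-1)$; counting the initial pair as well yields at most $\rho+\lambda-1$ distinct pairs. This gives $|(\mu_a(u))_{1,2}|\leq\rho+\lambda-1\leq m+n-1$.

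Finally the bound on $|\mu_a(A^*)|$ is a count of matrices: a matrix $\mu_a(u)$ is determined by its entry $(\mu_a(u))_{1,1}=\varphi(u)\in M$ ($m$ choices), its entry $(\mu_a(u))_{2,2}=\psi(u)\in N$ ($n$ choices), and the set $(\mu_a(u))_{1,2}\subseteq M\times N$, which by the above has at most $\rho+\lambda-1$ elements. The number of subsets of the $mn$-element set $M\times N$ of cardinality at most $\rho+\lambda-1$ is $\binom{mn}{0}+\binom{mn}{1}+\dots+\binom{mn}{\rho+\lambda-1}$, and multiplying by $mn$ yields the asserted estimate. I expect the main obstacle to be the counting step of the third paragraph: one must argue carefully that weakly decreasing behaviour in one coordinate combined with weakly increasing behaviour in the other produces no repeated pairs, which is precisely where the antisymmetry coming from $\mathcal J$-triviality is indispensable — without it the quasiorders could contain cycles and the chain parameters $\rho,\lambda$ would fail to bound the number of distinct values.
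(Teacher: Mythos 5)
Your proof is correct and takes essentially the same approach as the paper's: both split $u$ at the successive occurrences of $a$, note that the first coordinates descend under $\leq_{\mathcal R}$ and the second ascend under $\leq_{\mathcal L}$, and use $\mathcal J$-triviality (hence $\mathcal R$- and $\mathcal L$-triviality) to bound the number of distinct pairs by $\rho+\lambda-1$ before counting matrices. The paper's proof merely exhibits the decomposition and says ``the statement follows''; your write-up supplies the chain-counting details it leaves implicit (and correctly reads the statement's misprint $m+n=1$ as $m+n-1$).
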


\begin{proof}
Let $u=u_0au_1au_2\dots au_k$ where $a\not\in \mathsf{c}(u_0),
\mathsf{c}(u_1),\dots,\mathsf{c}(u_k)$. Then
$$(\mu_a(u))_{1,2}=
\{\,(\varphi(u_0),\psi(u_1au_2\dots au_k)),\,
(\varphi(u_0au_1),\psi(u_2au_3\dots au_k)),\dots$$
$$
\dots,
(\varphi(u_0au_1\dots au_{k-1}),\psi(u_k))\,\}$$
and the statement follows.
\end{proof}

The following example shows that the bound for $|(\mu_a(u))_{1,2}|$ from
Proposition~\ref{prop-chains} is sharp.

\begin{example}
For $B\subseteq A$ we write 
$\overline B=\{\,u\in A^*\mid \mathsf{c}(u)=B\,\}.$
Notice first that, for $B\subsetneqq A$,
the syntactic monoid of $\overline B$ is isomorphic to
the monoid $(2^B,\cup)$ with a zero 0 adjoined.  
The syntactic homomorphism $\varphi$ maps $u\in B^*$ onto $\mathsf{c}(u)$
and $\varphi(u)=0$, otherwise, and we have $S=\{B\}$.
All the relations $\leq_\mathcal R,\, \leq_\mathcal L,\,
 \leq_\mathcal J$ coincide with the reverse inclusion
$\supseteq$.
Similarly for $C\subsetneqq A$.

Consider first the language $\overline B a \overline C$ for
$A=\{a,b,c\},\,B=\{a,b\},\, C=\{a,c\}$.
Then $\rho=\lambda =4$ and $\rho+\lambda-1=7$ and
$$(\mu_a(aababacacaa))_{1,2}=
\{\,(\emptyset,0),(\{a\},0),(B,0),(B,C),(0,C),(0,\{a\}),(0,\emptyset)\,\}\,.$$

\medskip
We can modify this example for arbitrary $\rho,\lambda\geq 4$ as follows:
$$B=\{a,b_1,\dots,b_{\rho-3}\},\ C=\{a,c_1,\dots,c_{\lambda-3}\},\ 
A=B\cup C\,.$$
Then
$$(\mu_a(aab_1ab_2a\dots ab_{\rho-3}ab_1ac_1ac_{\lambda-3}a\dots ac_2ac_1aa)_{1,2}
=$$
$$=
\{\,(\emptyset,0),(\{a\},0),(\{a,b_1\},0),(\{a,b_1,b_2\},0),\dots,(B,0),
(B,C),(0,C),\dots$$
$$\dots,(0,\{a,c_1,c_2\}),(0,\{a,c_1\})(0,\{a\}),
(0,\emptyset)\,\}\,.$$
\end{example}

\section{Level of Varieties}

Let $\mathcal V$ be a variety of languages. A well known fact is that
the pseudovariety of monoids corresponding to the class
$\mathsf{BPol}(\mathbf V)=\bigcup_{k\geq 0}\mathsf{BPol}_k(\mathbf V)$
is generated by all Sch\"utzenberger products
$\Diamond(M_0,\dots,M_n)$
where $M_0,\dots,M_n$ are syntactic monoids of  languages from   $\mathcal V$
-- see~(\cite{pi-kniha}, Theorems 5.1.4. and 5.1.5.).
Of course being interested in $\mathsf{BPol}_k(\mathbf V)$, one takes
$n=k$.

Here we are looking for a single finite monoid recognizing all languages
in $\mathcal V(A)$, $A$ fixed.
We can succeed under certain circumstances as follows.
Let $\mathbf V$ be a locally finite variety of monoids, i.e. the 
finitely generated monoids in $\mathbf V$ are finite. Let $\sim$ be the 
corresponding
fully invariant congruence on $X^*$, $X=\{x_1,x_2,\dots\}$, i.e.
the set of all identities which hold in  $\mathbf V$.
Notice that $X^*/{\sim}$ is the free monoid in $\mathbf V$ over the set
$X$. The
finite members 
of $\mathbf V$ form a
(the so-called equational) pseudovariety of finite monoids.
We denote the corresponding variety of languages by 
$\mathcal V$, i.e. $L\in \mathcal V(A)$
if and only if the syntactic monoid of $L$ is a member of 
$\mathbf V$. Then the free monoid 
in $\mathbf V$ over the set $A$ is the smallest monoid recognizing
all languages from $\mathcal V(A)$. Thus we consider somehow the descriptional
complexity for the whole varieties of languages.

One of the
main results of \cite{kp-cai} was an effective description of the fully
invariant congruence  
$\sim_k$ for the variety
$\mathsf{BPol}_k(\mathbf V)$.
Here we treat only the case of $k=1$.

\begin{result}[(\cite{kp-cai}, Theorem 1)]\label{res-kp}
For $u,v\in A^*$, we have
$$u\sim_1 v \text{ if and only if } u\sim v \text{ and for each } a\in A,$$
$$
\{\,(u'{\sim},u''{\sim})\mid u=u'au'',\, u',u''\in A^*\,\}=
\{\,(v'{\sim},v''{\sim})\mid v=v'av'',\, v',v''\in A^*\,\}.
$$
\end{result}

\begin{proposition} Let $A=\{a_1,\dots,a_d\}\subseteq X$ and
let $$\xi : A^* \rightarrow (A^*/{\sim}\ \Diamond\ A^*/{\sim})\times \dots
\times (A^*/{\sim}\ \Diamond\ A^*/{\sim})\ (d \text{ times })$$
be given by
$$u\mapsto (\mu_{a_1}(u),\dots,\mu_{a_d}(u))\,.$$
Then $\xi(A^*)$ is isomorphic to $A^*/{\sim}_1$, i.e. to the
free monoid in $\mathsf{BPol}_1(\mathbf V)$ over the alphabet $A$. 

In particular, if the cardinality of $A^*/{\sim}$ is $n$, then
the size of $A^*/{\sim}_1$ is bounded by the number $n 2^{dn^2}$.

\end{proposition}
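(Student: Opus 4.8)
The plan is to recognize $\xi$ as a monoid homomorphism and to identify its kernel congruence with $\sim_1$. Each component map $\mu_{a_i}$ is a homomorphism from $A^*$ into the Sch\"utzenberger product $A^*/{\sim}\ \Diamond\ A^*/{\sim}$ (taking $\varphi=\psi$ to be the natural projection $A^*\to A^*/{\sim}$), as already noted in Section~3; hence $\xi$, being the tuple of these maps, is a homomorphism into the direct product, and $\xi(A^*)$ is a submonoid of that product. By the homomorphism theorem, $\xi(A^*)\cong A^*/{\ker\xi}$, where $\ker\xi=\{\,(u,v)\mid \xi(u)=\xi(v)\,\}$. Everything thus reduces to proving $\ker\xi={\sim}_1$.

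To match $\ker\xi$ with $\sim_1$ I would unwind the equality $\xi(u)=\xi(v)$ entry by entry. It holds iff $\mu_{a_i}(u)=\mu_{a_i}(v)$ for every $i$. Comparing the diagonal entries of these matrices gives $\varphi(u)=\varphi(v)$ and $\psi(u)=\psi(v)$; since both $\varphi$ and $\psi$ are the projection onto $A^*/{\sim}$, each of these is precisely the condition $u\sim v$, and it is the same condition for all $i$. Comparing the entries at position $(1,2)$ gives, for each letter $a_i$, the equality of factorization sets $\{\,(u'{\sim},u''{\sim})\mid u=u'a_iu''\,\}=\{\,(v'{\sim},v''{\sim})\mid v=v'a_iv''\,\}$. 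Collecting these conditions over all $i=1,\dots,d$ reproduces verbatim the characterization of $\sim_1$ in Result~\ref{res-kp}. Hence $\xi(u)=\xi(v)$ iff $u\sim_1 v$, so $\ker\xi={\sim}_1$ and $\xi(A^*)\cong A^*/{\sim}_1$, which is the free monoid in $\mathsf{BPol}_1(\mathbf V)$ over $A$ by the discussion opening this section.

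For the cardinality estimate the point is not to bound the whole product $(A^*/{\sim}\ \Diamond\ A^*/{\sim})^d$, whose size $(n^2 2^{n^2})^d$ is far too large, but to exploit the special shape of the tuples in $\xi(A^*)$. Every component $\mu_{a_i}(u)$ carries the same diagonal data: $(\mu_{a_i}(u))_{1,1}=(\mu_{a_i}(u))_{2,2}=u{\sim}$, a single element of the $n$-element monoid $A^*/{\sim}$. Thus the diagonal entries of the entire $d$-tuple contribute at most $n$ possibilities. The remaining freedom lies in the $d$ sets $(\mu_{a_i}(u))_{1,2}\subseteq (A^*/{\sim})\times(A^*/{\sim})$; each is a subset of an $n^2$-element set, hence contributes at most $2^{n^2}$ possibilities, and there are $d$ of them. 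Multiplying gives $|\xi(A^*)|\le n\,2^{dn^2}$, and by the isomorphism $|A^*/{\sim}_1|\le n\,2^{dn^2}$.

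The only genuinely delicate point is the second paragraph: ensuring that the entry-by-entry comparison of $\xi(u)=\xi(v)$ aligns precisely -- no more, no less -- with the two clauses defining $\sim_1$. Once one observes that $\varphi=\psi$ forces the two diagonal conditions to collapse into the single relation $u\sim v$, and that ranging over the component maps $\mu_{a_1},\dots,\mu_{a_d}$ is exactly ranging over all letters $a\in A$, the identification is immediate; the cardinality bound is then a bookkeeping consequence of the shared-diagonal observation.
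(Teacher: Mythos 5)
Your proof is correct and follows essentially the same route as the paper: the identification of $\ker\xi$ with $\sim_1$ via Result~1 (which the paper compresses into ``follows immediately'') and the cardinality bound via the observation that all diagonal entries of the matrices $\mu_{a_1}(u),\dots,\mu_{a_d}(u)$ coincide. Your write-up merely makes explicit the homomorphism-theorem bookkeeping that the paper leaves implicit.
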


\begin{proof}
The first part follows immediately from Result~\ref{res-kp}.
To get the estimate,
 realize that all the diagonal entries in the matrices
$\mu_{a_1}(u),\dots,\mu_{a_d}(u)$, for a given $u\in A^*$,
are the same.
\end{proof}

Let us consider the simplest non-trivial example.
It shows, among others, that the estimate from the last proposition can be  
far from being optimal.

\begin{example}
Let $\mathbf V = \mathbf {SL}$ -- the class of all semilattices.
Then $u \sim v$ if and only if $\mathsf{c}(u)=\mathsf{c}(v)$.
The free semilattice (in the signature of monoids)
over a set $A\subseteq X$ is isomorphic to
$M=(2^A,\cup)$. In particular, this variety is locally finite.
For the corresponding variety of languages $\mathcal V$
and a finite alphabet $A$, the set $\mathcal V(A)$ consists of unions
of $\overline B$'s, $B\subseteq A$.

Let $A=\{a,b\}$.
We are going to improve
the bound $4\cdot 2^{2\cdot4^2}$ from the last proposition. 
Clearly, the cardinality of $M\Diamond M$ is $2^{20}$.
We will calculate the image of $\mu_a$ first.

\def\c{\mathsf c}
\def\0{\emptyset}
\def\a{\{a\}} \def\b{\{b\}}
\def\ch{\mathsf{char}}
\def\h{\mathsf h} \def\t{\mathsf t}
We write also, for $a_1,\dots,a_k\in A$,
$\mathsf{h}(a_1\dots a_k)=a_1$ and
$\mathsf{t}(a_1\dots a_k)=a_k$.
Let $u=u_0au_1a\dots au_k$ where $u_0,\dots,u_k\in \{b\}^*$. 
The {\it characteristic sequence} $\mathsf{char}(u)$ of $u$ is
$$(\, (\c(u_0),\c(u_1a\dots au_k)),(\c(u_0au_1),\c(u_2a\dots au_k)),
\dots,(\c(u_0a\dots au_{k-1}),\c(u_k))\,)$$
with removed repetitions. We get $(\mu_a(u))_{1,2}$ when considering
it as a set. Note that
$(\mu_a(u))_{1,1}= (\mu_a(u))_{2,2}=\c(u)$ for each $u\in A^*$.
We divide the elements of $A^*$ into several classes:

(i) For $u=1$ we have $\c(u)=\emptyset,\,\ch(u)=1$ (the sequence of length 0).

(ii) For $u=b^k,\,k\geq 1$, we have $\c(u)=\b,\, \ch(u)=1$.

(iii) For $u=a^k,\,k\geq 1$, we have $\c(u)=\a$
and $\ch(a)=((\0,\0))$, $\ch(a^2)=((\0,\a),(\a,\0))$,
and $\ch(a^k)=((\0,\a),(\a,\a),(\a,\0))$ if $k\geq 3$.

All remaining words have $\c(u)=A$.

(iv) If $|u|_a=1$, then
$\ch(u)$ is one of the following sequences
$((\0,\b)),\, ((\b,\0)),\,  ((\b,\b))$.
 
All remaining words have $|u|_a\geq 2$.

(v) If $\h(u)=a,\,\t(u)=b,\, ba$ not being a subword of $u$, i.e.
$u=a^kb^{\ell},\, k\ge 2,\, \ell\ge 1$. Then
either $\ch(u)=(\,(\0,A),(\a,\b)\,)$ for $k=2$ or
$\ch(u)=(\,(\0,A),(\a,A),(\a,\b)\,)$ for $k\ge 3$.

(vi) The case  $u=b^{\ell}a^k,\, k\ge 2,\, \ell\ge 1$ is left-right dual to (v).

(vii) If $\h(u)=a,\,\t(u)=b$,\ $ba$ being a subword of $u$, then
$\ch(u)$ is a subsequence of
$$(\,(\0,A),(\a,A),(A,A),(A,\b)\,)$$
containing the first and the last item. The following words
witness that all 4 possibilities can happen:
$abab, ababab, aabab, aababab$.

(viii)  The case left-right dual to (vii).

(ix) If $\h(u)=\t(u)=a$, then $\ch(u)$ is a subsequence
of
$$(\,(\0,A),(\a,A),(A,A),(A,\a),(A,\0)\,)$$
containing the first and the last item. The following words
witness that all 8 possibilities can happen:
$aba, aaba, abaa, ababa, aabaa,aababa, ababaa, aababaa$.

(x) If $\h(u)=\t(u)=b$, then $\ch(u)=(\,(\b,A),(A,\b)\,)$
or   $\ch(u) =(\,(\b,A),(A,A),(A,\b)\,)$.
Appropriate words are $baab$ and $baaab$.
\smallskip

Altogether we have 30 elements in $\mu_a(A^*)$. 
In fact our consideration until now could be presented in Section 3.
Returning to the free monoid in the variety corresponding to the class
$\mathsf{BPol}_1(\mathbf {SL})$ over $A$, we can state at present only that it
has at most $30\cdot 30$ elements. 
When considering the mapping
$\xi$, not all possible $900$ combinations can happen and we can
further decrease the estimate for $|\xi(A^*)|$. Using more advanced techniques
we can even get 100 as an upper bound.
\end{example}
\medskip

\noindent
{\bf Acknowledgement.} The authors would like to express their gratitude
to Janusz Brzozowski who suggested them to use the construction from \cite{yzs}
in the proof of Proposition~2.

\end{document}